\documentclass[11pt]{amsart}
\usepackage{amsmath,amssymb,amsfonts,amsbsy}
\usepackage{tikz-cd}
\usetikzlibrary{shapes}
\usepackage{amsbsy}
\usepackage{amscd}
\usepackage{wasysym}
\usetikzlibrary{matrix,arrows,decorations.pathmorphing}
\usepackage{graphicx}
\usepackage{float}
\usepackage{setspace}
\usepackage[margin=1.0in]{geometry}

\newtheorem{thm}{Theorem}[section]
\newtheorem{lem}[thm]{Lemma}

\newtheorem{prop}[thm]{Proposition}

\theoremstyle{definition}
\newtheorem{example}[thm]{Example}
\theoremstyle{definition}
\newtheorem{defn}[thm]{Definition}
\theoremstyle{definition}
\newtheorem{remark}[thm]{Remark}
\newtheorem{remarks}[thm]{Remarks}

\newcommand{\tr}{\mathrm{tr}}
\newcommand{\lamdiag}{\Lambda^{\mathrm{diag}}}

\newcommand{\rmv}[1]{}

\def\IR{{\bf R}}

\def\bA{{\mathbf {A}}}

\def\bbD{{\mathbf {D}}}

\def\ba{{\bf a}}

\def\b0{{\bf 0}}

\def\w{{\bf w}}

\def\x{{\bf x}}

\def\cD{{\mathcal D}}

\def\cP{{\mathcal P}}

\def\){{\right)}}
\def\({{\left(}}

\def\ket#1{| #1 \rangle}
\def\bra#1{\langle #1 |}
\def\kb#1#2{|#1\rangle\!\langle #2 |}

\newcommand{\diag}{\operatorname{diag}}
\newcommand{\spn}{\operatorname{span}}
\newcommand{\conv}{\operatorname{conv}}

\begin{document}

\title[Higher Rank Matricial Ranges and Hybrid Quantum Error Correction]{Higher Rank Matricial Ranges \\ and Hybrid Quantum Error Correction}
\author[N.~Cao, D.~W. Kribs, C.-K.~Li, M.~I. Nelson, Y.-T.~Poon, B. Zeng]{Ningping~Cao$^{1,2}$, David~W.~Kribs$^{1,2}$, Chi-Kwong~Li$^3$, Mike~I.~Nelson$^1$, Yiu-Tung~Poon$^{4,6}$, Bei~Zeng$^{1,2,5}$}

\address{$^1$Department of Mathematics \& Statistics, University of Guelph, Guelph, ON, Canada N1G 2W1}
\address{$^2$Institute for Quantum Computing, University of Waterloo, Waterloo, ON, Canada N2L 3G1}
\address{$^3$Department of Mathematics, College of William \& Mary, Williamsburg, VA, USA 23187}
\address{$^4$Department of Mathematics, Iowa State University, Ames, IA, USA 50011}
\address{$^5$Department of Physics, The Hong Kong University of Science and Technology, Clear Water Bay, Kowloon, Hong Kong}

\address{$^6$Center for Quantum Computing, Peng Cheng Laboratory, Shenzhen, 518055, China}

\begin{abstract}
We introduce and initiate the study of a family of higher rank matricial ranges, taking motivation from hybrid classical and quantum error correction coding theory and its operator algebra framework. In particular, for a noisy quantum channel, a hybrid quantum error correcting code exists if and only if a distinguished special case of the joint higher rank matricial range of the error operators of the channel is non-empty. We establish bounds on Hilbert space dimension in terms of properties of a tuple of operators that guarantee a matricial range is non-empty, and hence additionally guarantee the existence of hybrid codes for a given quantum channel. We also discuss when hybrid codes can have advantages over quantum codes and present a number of examples.
\end{abstract}

\subjclass[2010]{47L90, 81P45, 81P70, 94A40, 47A12, 15A60, 81P68}

\keywords{numerical range, matricial range, joint higher rank matricial ranges, quantum channel, quantum error correction, hybrid codes, operator algebra.}

\maketitle

\section{Introduction}

For more than a decade, numerical range tools and techniques have been applied to problems in quantum error correction, starting with the study of higher-rank numerical ranges \cite{choi2005quantum,choi2006higher2} and broadening and deepening to joint higher-rank numerical ranges and beyond \cite{woerdeman2008higher,li2007higher,choi2008geometry,li2008canonical,li2009condition,kribs09resprobs,
li2011generalized,lau2018convexity}. These efforts have made contributions to coding theory in quantum error correction and have also grown into mathematical investigations of interest in their own right.  In this paper, we expand on this approach to introduce and study a higher rank matricial range motivated both by recent hybrid coding theory advances  \cite{grassl2017codes,klappen2018} and the operator algebra framework for hybrid classical and quantum error correction \cite{beny2007generalization,beny2007quantum}.
Our primary initial focus here is on a basic problem for the matricial ranges, namely, how big does a Hilbert space need to be to guarantee the existence of a non-empty matricial range of a given type, without any information on the operators and matrices outside of how many of them there are. As such, we generalize a fundamental result from quantum error correction \cite{KLV,li2011generalized} to the hybrid error correction setting.

The theory of quantum error correction (QEC) originated at the interface between quantum theory and coding theory in classical information transmission and is at the heart of designing those fault-tolerant architectures~\cite{shor1995pw,steane1996error,gottesman1996d,bennett1996ch,knill1997knill}.  It was recognized early on during these investigations that the simultaneous transmission of both quantum and classical information over a quantum channel could also be considered, most cleanly articulated in operator algebra language in \cite{kuperberg2003capacity}. More recently, but still over a decade ago, the framework of ``operator algebra quantum error correction'' (OAQEC) \cite{beny2007generalization,beny2007quantum} was introduced. Motivated by a number of considerations, including a generalization of the operator quantum error correction approach \cite{kribs2005unified,kribs2005operator} to infinite-dimensional Hilbert space \cite{beny2009infinite}, it was also recognized that the OAQEC approach could provide a framework for error correction of hybrid classical and quantum information, though this specific line of investigation remained dormant for lack of motivating applications at the time. Moving to the present time and over the past few years, advantages in addressing the tasks of transmitting both quantum and classical information together compared to independent solutions have been discovered, from both information theoretic and coding theoretic points of view \cite{devetak2005capacity,hsieh2010entanglement,yard2005simultaneous,kremsky2008classical,braunstein2011zero,patra2013algebraic,grassl2017codes,klappen2018}. Additionally it is felt that these hybrid codes may find applications in physical descriptions of joint classical-quantum systems, in view of near-future available quantum computing devices ~\cite{xiang2013hybrid} and the so-called
Noisy Intermediate-Scale Quantum (NISQ) era of computing~\cite{preskill2018quantum}.

This paper is organized as follows. In the next section we introduce the joint higher rank matricial ranges and we prove the Hilbert space dimension bound result. The subsequent section considers a special case that connects the investigation with hybrid quantum error correction; specifically, for a noisy quantum channel, our formulation of the joint higher rank matricial ranges for the channel's error or ``Kraus'' operators leads to the conclusion that a hybrid quantum error correcting code exists for the channel if and only if one of these joint matricial ranges associated with the operators is non-empty. As a consequence of the general Hilbert space dimension bound result we establish generalizations of a fundamental early result in the theory of QEC \cite{KLV,li2011generalized} to the hybrid setting. In the penultimate section we explore how hybrid error correction could provide advantages over usual quantum error correction based on this analysis. We consider a number of examples throughout the presentation and we conclude with a brief future outlook discussion.

\section{Higher Rank Matricial Ranges}

\begin{defn}\label{maindefn}
Given positive integers $m,n,p,k,K \geq 1$, let $\cP_K$ be the set of $n\times K$ rank-$K$ partial isometry matrices, so $V^*V= I_K$ for $V \in \cP_K$, and let $\cD_p$ be the set of diagonal matrices
inside the set of $p\times p$ complex matrices $M_p$. Define the {\it joint rank $(k:p)$-matricial range}
of an $m$-tuple of matrices $\bA = (A_1, \dots, A_m) \in M_n^m$  by
$$\Lambda_{(k:p)}(\bA) = \{(D_1, \dots, D_m) \in \cD_p^m:
\exists V \in \cP_{kp} \hbox{ such that }
V^*A_j V =  D_j \otimes  I_k\hbox{ for } j = 1, \dots, m \}.$$
\end{defn}

Observe that when $p=1$, $\Lambda_{(k:p)}(\bA)$ becomes the rank-$k$ (joint when $m\geq 2$) numerical range considered in \cite{choi2005quantum,choi2006higher2,woerdeman2008higher,li2007higher,choi2008geometry,li2008canonical,li2009condition,li2011generalized,lau2018convexity} and elsewhere. 

We first discuss two reductions that we can make without loss of generality. 

\begin{remark}
Since every $A\in M_n$ has a  Hermitian decomposition $A=A_1+iA_2$, with $A_1, A_2\in H_n$, the set of $n\times n$ Hermitian matrices, we only need to consider $\Lambda_{(k:p)}(\bA)$ for $\bA\in H_n^m$, where $H_n^m$ is the set of $m$-tuples of $n \times n$ Hermitian matrices.

Furthermore, suppose $T = (t_{ij}) \in M_m$ is a real
invertible matrix, and $(c_1, \dots, c_m) \in \IR^{1\times m}$.
Let $\tilde \bA = (\tilde A_1, \dots, \tilde A_m)$, where for $j = 1, \dots, m$,
$$\tilde A_j = \sum_{\ell =1}^m t_{\ell,j} A_\ell + c_j I_n . 
$$
Then one readily shows that
$(D_1, \dots, D_m) \in \Lambda_{(k:p)}(\bA)$
if and only if
$(\tilde D_1, \dots, \tilde D_m) \in \Lambda_{(k:p)}(\tilde \bA)$,
where
$\tilde D_j = \sum_{\ell =1}^m t_{\ell,j} D_\ell + c_j I_k$ for $j = 1, \dots, m$.
So, the geometry of $\Lambda_{(k:p)}(\bA)$ is completely determined by
$\Lambda_{(k:p)}(\tilde \bA)$.

Now, we can choose a suitable $T = (t_{ij})$ and $(c_1, \dots, c_m)$
so that $\{\tilde A_1, \dots, \tilde A_r, I_n\}$ is linearly independent,
and $\tilde A_{r+1} = \cdots = \tilde A_m = 0_n$. Then the geometry of
$\Lambda_{(k:p)}(\tilde \bA)$ is completely determined by
$\Lambda_{(k:p)}(\tilde A_1,\dots, \tilde A_r)$. Hence, in what follows, we always assume that
$\{A_1, \dots, A_m, I_n\}$ is linearly independent.
\end{remark}

The result we prove below is a generalization of the main result from \cite{KLV}, which applies to the $p=1$ case in our notation. This was also proved in \cite{li2011generalized} via a matrix theoretic approach and we make use of this in our proof, so we state the original result as it was presented in  \cite{li2011generalized} using the present notation. 

\begin{lem} \label{lemma1}
Let $\bA = (A_1,\ldots ,A_m) \in H_n^m$ and let $m\ge 1$ and $ k> 1$.
If
$$ n\ge (k-1) (m+1)^2,$$
then $\Lambda_{(k:1)}(\bA) \ne \emptyset$.
\end{lem}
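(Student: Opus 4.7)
The plan is to prove \lemref{lemma1} by induction on $k$, with $m$ fixed; the base case $k = 1$ (formally below the stated hypothesis but included as the anchor) holds because for any unit vector $v \in \IC^n$ the real tuple $(\la v, A_j v\ra)_{j=1}^m$ lies in $\Lambda_{(1:1)}(\bA)$. For the inductive step I assume the conclusion for $k - 1$ and reduce the problem on $\IC^n$ with $n \ge (k-1)(m+1)^2$ to one on a carefully chosen subspace.

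First I would pick a unit vector $v_1 \in \IC^n$ together with a candidate $d = (d_1,\ldots,d_m) \in \IR^m$ satisfying $d_j = \la v_1, A_j v_1\ra$, and then search for a rank-$(k-1)$ partial isometry $V'$ whose column space lies in the subspace $U \subseteq \IC^n$ orthogonal to $v_1$ and to each vector $A_j v_1$, and satisfies $(V')^* A_j V' = d_j I_{k-1}$. A short block-matrix calculation shows that adjoining $v_1$ to the columns of $V'$ yields $V \in \cP_k$ with $V^* A_j V = d_j I_k$, as required. Since $U$ has codimension at most $m + 1$ in $\IC^n$, one has $\dim U \ge (k-1)(m+1)^2 - (m+1) \ge (k-2)(m+1)^2$, so the inductive hypothesis can be activated on the compressions of the $A_j$ to $U$, but a priori it only furnishes \emph{some} point of $\Lambda_{(k-1:1)}$, not the specific target $d$.

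The main obstacle is precisely this \emph{matching} requirement: a single $d \in \IR^m$ must serve as the common target throughout the nested reductions. To overcome it, $d$ cannot come from an arbitrary $v_1$. Instead, I would choose it using the necessary slab characterization --- for every direction $c \in \IR^m$, the scalar $\sum_j c_j d_j$ must lie in the rank-$k$ numerical range interval $[\lambda_{n-k+1}(\sum_j c_j A_j),\, \lambda_k(\sum_j c_j A_j)]$ --- and locate $d$ as a sufficiently ``deep'' point in the intersection of these slabs. A Helly-type argument in $\IR^m$ shows that this intersection is non-empty provided every $m + 1$ slabs meet, and the bound $n \ge (k-1)(m+1)^2$ provides the eigenvalue spread needed for that as well as the buffer required so that, by Cauchy interlacing, the chosen $d$ persists as a valid target in the corresponding rank-$(k-1)$ slabs for the compression $\bA|_U$ after losing $m + 1$ dimensions. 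The hard part is verifying that this depth propagates consistently across all $k - 1$ levels of the induction: the factor $(m+1)^2$ in the bound decomposes as $(m+1)$ dimensions lost per reduction plus an extra $m(m+1)$ absorbed by the Helly/interlacing budget. This is exactly the bookkeeping carried out in the matrix-theoretic proof of \cite{li2011generalized}, and I would follow their approach to turn the outline above into a rigorous argument.
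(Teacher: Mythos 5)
Your sketch correctly isolates the crux --- a single target $d\in\mathbb{R}^m$ must be matched exactly at every level of the nested reduction --- but the mechanism you propose to resolve it does not work, and in the end you do not actually prove anything: the last sentence defers the essential step to \cite{li2011generalized}. Two concrete problems. First, the slab condition ($\sum_j c_jd_j$ lying between $\lambda_{n-k+1}$ and $\lambda_k$ of $\sum_j c_jA_j$ for every direction $c$) is only \emph{necessary} for membership in $\Lambda_{(k:1)}(\bA)$; for $m\ge 3$ it is far from sufficient, since the joint rank-$k$ (and already the joint rank-$1$) numerical range is generally non-convex and strictly smaller than the intersection of slabs. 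So even if a Helly argument produces a ``deep'' point $d$ of the slab intersection, and even if Cauchy interlacing keeps $d$ inside the corresponding slabs of the compressions, nothing guarantees that $d$ is \emph{attained}: you need, at every stage, a unit vector $v_1$ in the current subspace with $\langle v_1,A_jv_1\rangle$ exactly equal to $d_j$ for all $j$, and no depth-implies-attainability statement is available or proved here. Equivalently, the inductive hypothesis you invoke (Lemma~\ref{lemma1} for $k-1$) gives you \emph{some} point of $\Lambda_{(k-1:1)}$ of the compressed tuple, and the strengthened hypothesis you would actually need (``every sufficiently deep point of the slab intersection lies in the range'') is never formulated, let alone established --- and it is false in general. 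The claimed bookkeeping ``$(m+1)$ per step plus $m(m+1)$ for Helly/interlacing'' is therefore not backed by any argument.

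For comparison, the result is quoted in the paper from \cite{KLV} and \cite{li2011generalized} rather than reproved, and the known matrix-theoretic proof sidesteps the matching problem in exactly the way your outline misses: one does \emph{not} fix a target in advance. Instead one constructs $q=k(m+1)-m$ orthonormal vectors $x_1,\dots,x_q$, each new $x_r$ chosen orthogonal to all previous $x_s$ \emph{and} to their images $A_jx_s$ (a loss of at most $m+1$ dimensions per step, which with an eigenvector trick at the first step gives precisely the budget $n\ge(k-1)(m+1)^2$), so that all cross-compressions $\langle x_r,A_jx_s\rangle$, $r\ne s$, vanish. Only then does convexity enter: Tverberg's theorem partitions the $q$ points $(\langle x_r,A_1x_r\rangle,\dots,\langle x_r,A_mx_r\rangle)\in\mathbb{R}^m$ into $k$ groups whose convex hulls share a common point $a$, and suitable superpositions $y_i=\sum_{r\in R_i}\sqrt{t_r}\,x_r$ within the groups give an isometry $V$ with $V^*A_jV=a_jI_k$. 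This ``build first, match afterwards by a partition theorem'' structure (the same one the authors generalize with Tverberg's theorem elsewhere in the paper) is the idea your proposal is missing; without it, the induction on $k$ with a pre-chosen $d$ cannot be made to close.
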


Observe that if $(a_1,\dots,a_m)\in\Lambda_{kp}(\bA)$, then $(a_1I_p,\dots,a_mI_p)\in\Lambda_{(k:p)}(\bA)$.  Thus by Lemma~\ref{lemma1}, if $n\ge (kp-1)(m+1)^2$, then $\Lambda_{kp}(\bA)\neq\emptyset$;  hence $\Lambda_{(k:p)}(\bA)\neq\emptyset$. The following theorem gives an improvement on this bound.

\begin{thm}  \label{main1}
Let $\bA = (A_1,\ldots ,A_m) \in H_n^m$ and let $m, p \ge 1$ and $ k> 1$.
If
$$ n\ge (m+1)( (m+1)(k-1)+k(p-1)),$$
then $\Lambda_{(k:p)}(\bA) \ne \emptyset$.
\end{thm}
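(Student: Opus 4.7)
The plan is to build the desired isometry $V \in \cP_{kp}$ block by block as $V = [V_1 \mid V_2 \mid \cdots \mid V_p]$ with each $V_i$ an $n \times k$ isometry, by $p$ successive invocations of Lemma~\ref{lemma1}. Writing out the required identity $V^* A_j V = D_j \otimes I_k$ block-by-block, the subspaces $\cV_i := \operatorname{range}(V_i)$ must be mutually orthogonal and $k$-dimensional, each diagonal compression $P_{\cV_i} A_j P_{\cV_i}$ must act as a scalar on $\cV_i$, and the off-diagonal blocks $V_i^* A_j V_{i'}$ (for $i \ne i'$) must vanish. The off-diagonal vanishing condition is the genuinely new constraint compared with the $p=1$ setting treated in Lemma~\ref{lemma1}, and handling it efficiently is the crux of the argument.

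Here is the inductive step I propose. Having constructed orthogonal subspaces $\cV_1, \dots, \cV_{i-1}$ with the required properties, set $Q = \sum_{i' < i} P_{\cV_{i'}}$ and $Q^\perp = I - Q$. To enforce the vanishing of the blocks $V_{i'}^* A_j V_i$ for all $i' < i$, I first restrict the search for $\cV_i$ to
$$\mathcal{W}_i \;=\; \operatorname{range}(Q^\perp) \,\cap\, \bigcap_{j=1}^m \ker\!\bigl(Q A_j Q^\perp\bigr).$$
Each map $Q A_j Q^\perp$, regarded as a linear map from $\operatorname{range}(Q^\perp)$ into $\operatorname{range}(Q)$, has rank at most $(i-1)k$, so its kernel inside $\operatorname{range}(Q^\perp)$ has codimension at most $(i-1)k$; intersecting the $m$ kernels therefore forces
$$\dim \mathcal{W}_i \;\ge\; \bigl(n - (i-1)k\bigr) - m(i-1)k \;=\; n - (m+1)(i-1)k.$$
I then apply Lemma~\ref{lemma1} to the Hermitian tuple $\bigl(P_{\mathcal{W}_i} A_1 P_{\mathcal{W}_i}, \dots, P_{\mathcal{W}_i} A_m P_{\mathcal{W}_i}\bigr)$ viewed as acting on $\mathcal{W}_i$; provided $\dim \mathcal{W}_i \ge (k-1)(m+1)^2$, this produces the desired $k$-dimensional $\cV_i \subseteq \mathcal{W}_i$ on which each $A_j$ compresses to a scalar, and the inclusion $\cV_i \subseteq \mathcal{W}_i$ automatically delivers the off-diagonal vanishing by the defining property of $\mathcal{W}_i$.

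The binding constraint is $i = p$, where the required inequality $\dim \mathcal{W}_p \ge (k-1)(m+1)^2$ rearranges to precisely the hypothesis $n \ge (m+1)\bigl((m+1)(k-1) + k(p-1)\bigr)$. The main substantive point is therefore the two-layer reduction---first cut down to $\mathcal{W}_i$, then invoke Lemma~\ref{lemma1} at rank $k$ rather than at rank $kp$---which costs only $(m+1)k$ dimensions per step, linear in $k$, in place of the quadratic-in-$k$ penalty one pays by applying Lemma~\ref{lemma1} once at parameter $kp$ to obtain the weaker bound $n \ge (kp-1)(m+1)^2$ noted just before the theorem. I do not anticipate any real obstacle beyond careful dimension bookkeeping across the induction and checking that each compression is again a Hermitian tuple to which Lemma~\ref{lemma1} legitimately applies.
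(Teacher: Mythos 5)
Your argument is correct and is essentially the paper's proof: the paper likewise peels off one rank-$k$ scalar block at a time via Lemma~\ref{lemma1}, reserving at each step the at most $(m+1)k$ dimensions spanned by the current block together with its images under the $A_j$ so that the off-diagonal couplings vanish, which gives exactly the bound $n\ge (k-1)(m+1)^2+(m+1)k(p-1)$. The only difference is presentational: the paper organizes the construction as an induction on $p$ after a unitary change of basis, while you iterate directly through the admissible subspaces $\mathcal{W}_i$ defined by the kernel conditions.
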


\begin{proof} We will prove the result by induction on $p$.
When $p=1$, the bound $(m+1)( (m+1)(k-1)+k(p-1))=(k-1)(m+1)^2$ is given in Lemma~\ref{lemma1}. 

Suppose $p>1$. 
We suppose for $r<p$, we can find an $n\times rk$ 
matrix $U_r$    and $r\times r$ diagonal matrices $D_{j,r} $, $1\le j\le m$  such that 
$U_r^* U_r  = I_{rk}$ and $U_r^*A_jU_r  =  D_{j,r}   \otimes I_k$ for all  $1\le j\le m$.

Since $(m+1)( (m+1)(k-1)+k(p-1))>(k-1)(m+1)^2$, there exist an $n\times k$ matrix $U_1$  and scalars $d_j$, $1\le j\le m$ such that $U_1^* U_1  = I_k$ and $U_1^*A_jU_1  =d_jI_k$ for all  
$1\le j\le m$.
Let $U$ be unitary with first $\ell$ columns containing the columns spaces of
$U_1, AU_1, \dots, A_m U_1$. Then $\ell \le (m+1)k$ and 
$U^*A_jU = B_j \oplus C_j$ with $B_j \in M_\ell$ for $j = 1, \dots, m$, and 
$C_j \in M_{n-\ell}$, where $n-\ell \ge (m+1)((m+1)(k-1) + k(p-2))$.
By induction assumption, 
$\Lambda_{(k:p-1)}(C_1,\dots, C_m)$ is non-empty, say, containing an
$m$-tuple of diagonal matrices
$(D_{j1}, \dots, D_{jm}) \in M_{p-1}^m$. So,
we can find an $n\times (k-1)(m+1)^2$ matrix $U_2$ such that 
$U_2^*A_jU_2 = D_{j\ell} \otimes I_k$ for $j = 1, \dots, m$.
Thus, there is  $V = [U_1|V_2] \in M_{n,pk}$
such that $V^*V= I_{pk}$ and 
$V^*A_jV = d_j I_k \oplus D_{j\ell} \otimes I_k$ for $j = 1, \dots, m$.
Hence, $\Lambda_{(k: p)}(\bA)\neq \emptyset$.
\end{proof}

\begin{remarks} 

$(i)$ Let $n(k,m)$ (respectively, $n(k:p,m)$) be the minimum number such that for all $n\ge n(k,m)$ (respectively, $n(k:p,m)$), we have $\Lambda_{k}(\bA) \ne \emptyset$  (respectively, we have $\Lambda_{(k:p)}(\bA) \ne \emptyset$) for all $\bA\in H_n^m$. 
 Clearly, we have $n(kp,m)\ge n(k:p,m)\ge kp$. In  Example \ref{x2} and  \ref{xyz2}, we will see that sometimes the lower bound can be attained.

$(ii)$ The upper  bound $(m+1)( (m+1)(k-1)+k(p-1))\ge n(k:p,m)$ in Theorem \ref{main1}   is not optimal.  The  same proof   shows that $n(k:p,m)\le n(k,m) +(m+1)k(p-1)$. So, if we can lower the bound for $n(k,m)$, then we can lower the bound for $n(k:p,m)$. For example, since $n(k,1)=2k-1$ \cite{choi2006higher2} and $n(k,2)=3k-2$ \cite{li2011generalized}, we have $n(k:p,1)\le 2pk-1$ and $n(k:p,2)\le 3pk-2$. We also note that using Fan and Pall's interlacing theorem \cite{FP}, one can show that  $n(k:p,1)=(p+1)k-1$.

$(iii)$ In the proof of Theorem~\ref{main1}, suppose $U_1^*A_j U_1$ has leading $k\times k$ submatrix
equal to $a_{j1} I_k$. Then we can find a unitary $X$ such that
$X^*A_jX = (B^{(j)}_{pq})_{1\le p, q \le 2}$  with $B^{(j)}_{11} = a_j I_k$,
$B^{(j)}_{12} = 0_{k\times r}$ and $B^{(j)}_{13}$ is $k \times s$ with $s \le mk$.
That is why we can induct on the leading $(n-s)\times (n-s)$ matrices.
Of course, we can have some savings if $s < mk$ at any step.

$(iv)$  Also, when $m = 1$, it does not matter whether we want $ D_j   \otimes   I_k $ or 
$C_j\otimes I_k$ for diagonal $D_j$ or general Hermitian $C_j$. We can diagonalize
$C_j$. Note that if $n = (p+1)k-1$, then the set $\Lambda_{(k:p)}(A)$ is unique
if the eigenvalues of $A$ are distinct. It should be possible to say more 
if there are repeated eigenvalues, and in that case one can lower the requirement of $n \ge (p+1)k-1$.

$(v)$ When $\{A_1, \dots, A_m\}$ is a commuting family, then $A_p + iA_q$
is normal for any $p< q$. The results in \cite{GLPS} might be useful to study this further. 

$(vi)$ One could also study a more general class of matricial ranges in which Definition~\ref{maindefn} would be viewed as a special case; namely, the definition could be broadened to allow for arbitrary $p \times p$ matrices in the $m$-tuples of $\Lambda_{(k:p)}(\bA)$, removing the diagonal matrix restriction. One can generalize Theorem~\ref{main1} and  obtain other interesting results; see Section 5.
\end{remarks}

Using similar techniques, we may also apply recent related work on the shape of joint matricial ranges \cite{lau2018convexity} to obtain the following. 

\begin{thm} \label{main2}
Let $\bA = (A_1, \dots, A_m)\in H_n^m$, and $m,p, k \geq 1$ satisfy
$n \ge (kp(m+2)-1)(m+1)^2$. Then 
$\Lambda_{(kp(m+2):1)}(\bA)$ is non-empty, and
 $\Lambda_{(k:p)}(\bA)$ is star-shaped with star center
$(a_1I_p, \dots, a_m I_p)$ for any $(a_1, \dots, a_m) \in \Lambda_{(kp(m+2):1)}(\bA)$,
i.e., $t (a_1 I_{p}, \dots, a_m I_{p}) + (1-t)(B_1,\dots, B_m) \in \Lambda_{(k:p)}(\bA)$
for any $t \in [0,1]$ and $(B_1, \dots, B_m) \in \Lambda_{(k:p)}(\bA)$.
\end{thm}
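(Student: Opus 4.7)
The plan is to prove both parts in tandem. First, the non-emptiness of $\Lambda_{(kp(m+2):1)}(\bA)$ follows directly from \lemref{lemma1} applied with $kp(m+2)$ in place of $k$, since the hypothesis $n\ge (kp(m+2)-1)(m+1)^2$ is exactly what that lemma requires. So I may fix $(a_1,\dots,a_m)\in\Lambda_{(kp(m+2):1)}(\bA)$ witnessed by an $n\times kp(m+2)$ matrix $W$ with $W^*W=I_{kp(m+2)}$ and $W^*A_jW=a_jI_{kp(m+2)}$ for $1\le j\le m$.

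Next, fix an arbitrary $(D_1,\dots,D_m)\in\Lambda_{(k:p)}(\bA)$ witnessed by $V=[V_1|\cdots|V_p]\in\cP_{kp}$ with $V_i\in M_{n,k}$, $V_i^*V_l=\delta_{il}I_k$, and $V_i^*A_jV_l=d_{j,i}\,\delta_{il}\,I_k$ where $D_j=\diag(d_{j,1},\dots,d_{j,p})$. Given $t\in[0,1]$, my target is to build $U=[U_1|\cdots|U_p]\in\cP_{kp}$ of the form
$$U_i=\sqrt{1-t}\,V_i+\sqrt{t}\,W_i,$$
where the $W_i\in M_{n,k}$ are chosen inside $\mathrm{range}(W)$ and are orthogonal to the columns of $V$ and of $A_jV$ for all $j$. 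If this can be done, the orthogonality conditions kill the cross terms, and using the fact that compressions of $A_j$ to any subspace of $\mathrm{range}(W)$ equal $a_j$ times the identity one immediately computes
$$U_i^*U_l=\delta_{il}I_k,\qquad U_i^*A_jU_l=\bigl((1-t)d_{j,i}+ta_j\bigr)\delta_{il}I_k,$$
so $U^*A_jU=\bigl((1-t)D_j+t a_jI_p\bigr)\otimes I_k$, giving the star-shape conclusion.

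The core of the argument is therefore the dimension count showing that the required $W_i$'s can be extracted from $\mathrm{range}(W)$. Let $\mc{S}=\spn\{V_l,\,A_jV_l:1\le l\le p,\,1\le j\le m\}$, so $\dim\mc{S}\le kp(m+1)$. By the standard dimension formula,
$$\dim\bigl(\mathrm{range}(W)\cap\mc{S}^\perp\bigr)\ge\dim\mathrm{range}(W)+\dim\mc{S}^\perp-n\ge kp(m+2)+\bigl(n-kp(m+1)\bigr)-n=kp.$$
Hence this intersection contains an orthonormal system of $kp$ vectors, which I group into $p$ blocks $W_1,\dots,W_p$ of $k$ columns each; by construction each $W_i$ lies in $\mathrm{range}(W)$ (so $W_i=WZ_i$ for some $Z_i$ with $Z_i^*Z_l=\delta_{il}I_k$, giving $W_i^*A_jW_l=a_j\delta_{il}I_k$) and is orthogonal to all $V_l$ and all $A_jV_l$.

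The main obstacle, and the reason for the particular dimension bound, is precisely this intersection count: the factor $(m+2)$ in $kp(m+2)$ is chosen so that $\mathrm{range}(W)$ has just enough ``extra room'' beyond $\mc{S}$ (which occupies at most $kp(m+1)$ dimensions) to accommodate the desired $kp$-dimensional subspace. Any smaller choice of star-center rank would break the dimension count; any weaker containment hypothesis on $n$ would break the existence of $W$ via \lemref{lemma1}. Once the $W_i$'s are in hand, everything else is a short direct verification.
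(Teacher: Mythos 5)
Your proof is correct and matches the approach the paper has in mind: the paper states this theorem without a written proof, pointing to the star-shapedness techniques of \cite{lau2018convexity}, and your argument---obtain the star center from \lemref{lemma1} applied with rank $kp(m+2)$, then use the dimension count $kp(m+2)-kp(m+1)=kp$ to extract from its range a $kp$-dimensional block $[W_1|\cdots|W_p]$ orthogonal to the columns of $V$ and of $A_1V,\dots,A_mV$, and form $U_i=\sqrt{1-t}\,V_i+\sqrt{t}\,W_i$---is precisely that technique adapted to the $(k:p)$ setting. The block verifications $U^*U=I_{kp}$ and $U^*A_jU=\big((1-t)D_j+t\,a_jI_p\big)\otimes I_k$ (cross terms vanishing by the orthogonality to $\mathcal{S}$ and Hermiticity of the $A_j$, and $W_i^*A_jW_l=a_j\delta_{il}I_k$ from $W_i=WZ_i$) are all sound, so there is no gap.
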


\section{Application to Hybrid Quantum Error Correction}

In quantum information,  a {\it quantum channel}
 corresponds to a completely positive and trace preserving (CPTP) linear
map $\Phi: M_n \rightarrow M_n$. By the structure theory of
such maps \cite{choi1975completely},
there is a finite set $E_1, \dots \in M_n$
with $\sum_{j}   E_j^*E_j = I_n$ such that for all $\rho \in M_n$,
\begin{equation}\label{channel}
\Phi(\rho) = \sum_{j} E_j \rho E_j^*.
\end{equation}
These operators are typically referred to as the {\it Kraus operators} for $\Phi$ \cite{nielsen2000quantum}, and the minimal number of operators $E_j$ required for this operator-sum form of $\Phi$ is called the {\it Choi rank} of $\Phi$, as it is equal to the rank of the Choi matrix for $\Phi$ \cite{choi1975completely}. In the context of quantum error correction, $E_j$ are viewed as the {\it error operators} for the physical noise model described by $\Phi$.

The OAQEC framework \cite{beny2007generalization,beny2007quantum} relies on the well-known structure theory for finite-dimensional von Neumann algebras (equivalently C$^*$-algebras) when applied to the finite-dimensional setting \cite{davidson1996c}. Specifically, codes are characterized by such algebras, which up to unitary equivalence can be uniquely written as $\mathcal A = \oplus_i (I_{m_i}\otimes M_{n_i})$. Any $M_{n_i}$ with $n_i > 1$ can encode quantum information; which when $m_i=1$ corresponds to a standard (subspace) error-correcting code \cite{shor1995pw,steane1996error,gottesman1996d,bennett1996ch,knill1997knill} and when $m_i > 1$ corresponds to an operator quantum error-correcting subsystem code \cite{kribs2005unified,kribs2005operator}. If there is more than one summand in the matrix algebra decomposition for $\mathcal A$, then the algebra is a hybrid of classical and quantum codes. It has been known for some time that algebras can be used to encode hybrid information in this way \cite{kuperberg2003capacity}, and OAQEC provides a framework to study hybrid error correction in depth. Of particular interest here, we draw attention to the recent advance in coding theory for hybrid error-correcting codes \cite{grassl2017codes}, in which explicit constructions have been derived for a distinguished special case of OAQEC discussed in more detail below.

In the Schr\"{o}dinger picture for quantum dynamics, an OAQEC code is explicitly described as follows: $\mathcal A$ is {\it correctable} for $\Phi$ if there is a CPTP map $\mathcal R$ such that for all density operators $\rho_i \in M_{n_i}$, $\sigma_i \in M_{m_i}$ and probability distributions $p_i$,  there are density operators $\sigma_i^\prime$ such that
\[
(\mathcal R \circ \Phi)\, \left( \sum_i p_i  (\sigma_i \otimes \rho_i )    \right)  =    \sum_i p_i  (\sigma_i^\prime \otimes \rho_i) .
\]
This condition is perhaps more cleanly phrased in the corresponding Heisenberg picture as follows: $\mathcal A$ is correctable for $\Phi$ if there is a channel  $\mathcal R$ such that for all $X\in \mathcal A$,
\[
(\mathcal P_{\mathcal A} \circ \Phi^\dagger \circ \mathcal R^\dagger ) \, (X)\, = \, X ,
\]
where $\Phi^\dagger$ is the Hilbert-Schmidt dual map (i.e.,  $\tr( X \Phi(\rho)) = \tr (\Phi^\dagger(X) \rho)$) and $\mathcal P_{\mathcal A}(\cdot) = P_{\mathcal A} (\cdot) P_{\mathcal A}$ with $P_{\mathcal A}$ the unit projection of $\mathcal A$.

From \cite{beny2007generalization,beny2007quantum},  we have the following useful operational characterization of correctable algebras in terms of the Kraus operators for the channel:

\begin{lem}\label{testablelemma}
An algebra $\mathcal A$ is correctable for a channel $\Phi(\rho) = \sum_i E_i \rho E_i^*$ if and only if
\begin{equation}\label{testableconds}
[P E_i^* E_j P , X ] =0 \quad \quad \forall X \in \mathcal A,
\end{equation}
where $P$ is the unit projection of $\mathcal A$.
\end{lem}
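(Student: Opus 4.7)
The plan is to adapt the Knill--Laflamme theorem to the OAQEC setting, exploiting the structure theorem $\mathcal{A} = \bigoplus_i (I_{m_i}\otimes M_{n_i})$ for finite-dimensional von Neumann algebras and the corresponding decomposition of the commutant $\mathcal{A}' \cap PM_nP = \bigoplus_i (M_{m_i}\otimes I_{n_i})$.

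For the sufficiency direction ($\Leftarrow$), I would assume the commutation conditions $[PE_i^*E_jP, X]=0$ for all $X\in\mathcal{A}$ and construct an explicit recovery channel. The effect matrix $M = (PE_i^*E_jP)_{ij}$ is positive semidefinite (it is the compression of the Gram matrix of the Stinespring isometry for $\Phi$), and every entry lies in $\mathcal{A}'\cap P M_n P$. I would diagonalize $M$ by a unitary $(u_{ki})$ acting on the Kraus index, obtaining new Kraus operators $F_k = \sum_i u_{ki}E_i$ satisfying $PF_k^*F_\ell P = \delta_{k\ell}N_k$ with each $N_k \in \mathcal{A}'\cap PM_nP$. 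Polar decomposing $F_kP = W_k\sqrt{N_k}$ produces partial isometries $W_k$; since $\sqrt{N_k}$ commutes with every $X\in\mathcal{A}$, conjugation by $W_k^*$ correctly unwinds the noise on each block of $\mathcal{A}$. I would then define $\mathcal{R}(\sigma)=\sum_k W_k^*\sigma W_k + R_0\sigma R_0^*$, with $R_0$ supported on the complement of the combined ranges of the $W_k$ to ensure trace preservation, and verify directly that $(\mathcal{P}_\mathcal{A}\circ\Phi^\dagger\circ\mathcal{R}^\dagger)(X) = X$ for all $X\in\mathcal{A}$.

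For the necessity direction ($\Rightarrow$), assume $\mathcal{A}$ is correctable with recovery $\mathcal{R}$ having Kraus operators $\{R_k\}$. The composite $\mathcal{R}\circ\Phi$ has Kraus operators $\{R_k E_i\}$, and the Heisenberg correctability condition rewrites as $P\sum_{k,i}E_i^*R_k^*XR_kE_iP = X$ for all $X\in\mathcal{A}$. Since two Kraus representations of the same CP map are related by an isometry, comparing $\mathcal{R}\circ\Phi|_{\mathcal{A}}$ with the identity map on $\mathcal{A}$ forces the existence of operator-valued coefficients $\alpha_{ki,\ell j} \in \mathcal{A}'$ with $P(R_k E_i)^*(R_\ell E_j)P = \alpha_{ki,\ell j}$; contracting over $k,\ell$ using $\sum_k R_k^*R_k = I$ then yields $PE_i^*E_jP \in \mathcal{A}'\cap PM_nP$, which is the desired commutation condition.

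The main obstacle I anticipate is in the sufficiency direction: ensuring that the single recovery $\mathcal{R}$ built from the polar decompositions acts uniformly across all the summands $I_{m_i}\otimes M_{n_i}$ simultaneously. The compatibility hinges on the fact that $N_k\in\mathcal{A}'\cap PM_nP$ respects the block decomposition, so that the partial isometry $W_k$ maps each block-correctable subspace back onto itself coherently, and that the residual phases on the different summands cancel against one another when the sum over $k$ is taken. A secondary subtlety in the necessity direction is that the standard Kraus-uniqueness extraction must be applied to an algebra-fixing channel rather than a subspace-fixing one, which requires the refinement of the Knill--Laflamme argument developed in the OAQEC framework of \cite{beny2007generalization,beny2007quantum}.
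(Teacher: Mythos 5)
The paper does not actually prove Lemma~\ref{testablelemma}; it imports it from \cite{beny2007generalization,beny2007quantum}, so your proposal has to stand against the known OAQEC proof rather than anything in the text. Measured that way, your sufficiency direction has a genuine gap at its central step: you cannot, in general, diagonalize the effect matrix $M=(PE_i^*E_jP)_{ij}$ by a single scalar unitary $(u_{ki})$ acting on the Kraus index. In the Knill--Laflamme case the entries are scalars, so a unitary change of Kraus representation $F_k=\sum_i u_{ki}E_i$ achieves $PF_k^*F_\ell P=\delta_{k\ell}d_kP$; here the entries lie in the compressed commutant $P\mathcal{A}'P\cong\oplus_r\, (M_{m_r}\otimes I_{n_r})$, which is generally noncommutative, and a matrix over such an algebra is not diagonalizable by a scalar unitary. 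Even in the hybrid case of this paper, where $P\mathcal{A}'P=\oplus_r\mathbb{C}P_r$ is abelian and $PE_i^*E_jP=\sum_r\lambda_{ij}^{(r)}P_r$, your step would require one unitary that simultaneously diagonalizes all the sector matrices $(\lambda_{ij}^{(r)})_{ij}$, $1\le r\le p$, which is possible only if they commute --- and they need not. Since the rest of your construction (the polar decompositions $F_kP=W_k\sqrt{N_k}$, the orthogonality of the ranges of the $W_k$, and the final verification) hinges on $PF_k^*F_\ell P=\delta_{k\ell}N_k$, the argument collapses at this point. The repair is either to work sector by sector --- Eqs.~(\ref{testablematrixalg}) give cross-sector orthogonality $P_rE_i^*E_jP_s=0$ for $r\ne s$, so one may choose a separate diagonalizing unitary for each minimal central projection of $P\mathcal{A}'P$ and assemble a measure-then-correct recovery --- or to follow \cite{beny2007generalization,beny2007quantum}, who avoid diagonalization entirely by arguing through the complementary channel and the structure of the commutant.

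Your necessity sketch also leans on a claim that is not quite available: $\mathcal{R}\circ\Phi$ restricted to the code is \emph{not} the identity channel in general (it may move the classical/commutant factors, $\sigma_i\mapsto\sigma_i'$), so Kraus-uniqueness ``comparison with the identity map'' cannot be invoked directly. What correctability gives is that the unital CP map $X\mapsto P(\mathcal{R}\circ\Phi)^\dagger(X)P$ fixes $\mathcal{A}$, and the standard way to extract your intermediate statement $P(R_kE_i)^*(R_\ell E_j)P\in\mathcal{A}'P$ is a multiplicative-domain (Stinespring intertwining) argument: $\mathcal{A}$ lies in the multiplicative domain of this map, hence $X(R_kE_i)P=(R_kE_i)PX$ for all $X\in\mathcal{A}$, and summing over the recovery index $k$ using $\sum_kR_k^*R_k=I$ yields $[PE_i^*E_jP,X]=0$. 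You flag both difficulties yourself, but the proposal as written defers their resolution to the cited papers rather than supplying it, so it is an outline of the right statement rather than a proof.
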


In other words, $\mathcal A$ is correctable for $\Phi$ if and only if the operators $PE_i^* E_j P$ belong to the commutant $P \mathcal A^\prime P = P \mathcal A^\prime = \mathcal A^\prime P$. Applied to the familiar case of standard quantum error correction, with $\mathcal A = M_k$ for some $k$, we recover the famous Knill-Laflamme conditions \cite{knill1997knill}: $\{ P E_i^* E_j P \}_{i,j} \subseteq \mathbb{C} P$. The result applied to the case $\mathcal A = I_m \otimes M_k$ yields the testable conditions from operator quantum error correction \cite{kribs2005unified,kribs2005operator}:  $\{ P E_i^* E_j P \}_{i,j} \subseteq M_m \otimes I_k$. Anything else involves direct sums and has a hybrid classical-quantum interpretation as noted above.

We next turn to the distinguished special hybrid case noted above. First some additional notation: we shall assume all our channels act on a Hilbert space $\mathcal H$ of dimension $n \geq 1$, and so we may identify $\mathcal H = \mathbb{C}^n$ and let $\{ \ket{e_i} \}$ be the canonical orthonormal basis. Our algebras $\mathcal A$ then are subalgebras of the set of all linear operators $\mathcal L(\mathcal H)$ on $\mathcal H$, which in turn is identified with $M_n$ through matrix representations in the basis $\ket{e_i}$. We shall go back and forth between these operator and matrix perspectives as convenient.

As in \cite{grassl2017codes}, consider the case that $\mathcal A = \oplus_r \mathcal A_r$ with each $\mathcal A_r = M_k$ for some fixed $k\geq 1$. Let us apply the conditions of Eq.~(\ref{testableconds}) to  such algebras. Let $P_r$ be the (rank $k$) projection of $\mathcal H$ onto the support of $\mathcal A_r$, so that the $P_r$ project onto mutually orthogonal subspaces and $P = \sum_r P_r$ is the unit projection of $\mathcal A = \oplus_r P_r \mathcal L(\mathcal H) P_r$. Observe here the commutant of $\mathcal A$ satisfies: $P\mathcal A^\prime P = \oplus_r \mathbb{C} P_r$. Thus by Lemma~\ref{testablelemma}, it follows that $\mathcal A$ is correctable for $\Phi$ if and only if for all $i,j$ there are scalars $\lambda_{ij}^{(r)}$ such that
\begin{equation}\label{mtxalgeqns}
P E_i^* E_j P = \sum_r \lambda_{ij}^{(r)} P_r,
\end{equation}
which is equivalent to the equations:
\begin{equation}\label{testablematrixalg}
P_r  E_i^* E_j P_s = \delta_{rs} \lambda_{ij}^{(r)} P_r  \quad \quad \forall r,s.
\end{equation}
Indeed, these are precisely the conditions derived in \cite{grassl2017codes} (see Theorem~4 of \cite{grassl2017codes}).

For what follows, let $\mathcal V_r$, $1\le r \le p$ be mutually orthogonal $k$-dimensional subspaces of $\mathbb{C}^n$ and $P_r$
the orthogonal projection of
$\mathbb{C}^n$ onto $\mathcal V_r$, for $1\le r \le p$. Following \cite{grassl2017codes}, we say that $\{\mathcal V_r :1\le i\le p \}$
is a {\it hybrid $(k:p)$ quantum error correcting code} for the quantum channel
$\Phi$ if for all $i,j$ and all $r$ there exist scalars $\lambda_{ij}^{(r)}$ such that Eqs.~(\ref{mtxalgeqns}) are satisfied.

Consideration of the matricial ranges defined above is motivated by the following fact, which can be readily verified from Eqs.~(\ref{testablematrixalg}).

\begin{lem}
A quantum channel $\Phi$ as defined in Eq.~(\ref{channel}) has a hybrid error correcting code of dimensions $(k:p)$
if and only if
$$\Lambda_{(k:p)}(E_1^*E_1, E_1^*E_2, \dots, E_r^*E_r) \ne \emptyset.$$
\end{lem}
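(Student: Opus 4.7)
The plan is to observe that the hybrid code condition in Eq.~(\ref{testablematrixalg}) is, after a judicious choice of orthonormal bases, just the block decomposition of the single identity $V^*(E_i^*E_j)V = D_{ij}\otimes I_k$ appearing in the definition of $\Lambda_{(k:p)}$. So the proof is essentially a dictionary translation between the subspace/projection formulation of a hybrid code and the isometry/matricial range formulation, with no real content beyond careful bookkeeping.

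For the dictionary, a collection of $p$ mutually orthogonal $k$-dimensional subspaces $\mathcal V_1,\ldots,\mathcal V_p\subseteq\mathbb C^n$ corresponds bijectively to the elements of $\cP_{kp}$ as follows. Given such subspaces, pick an orthonormal basis of each $\mathcal V_r$ and assemble it into an $n\times k$ isometry $V_r$, so that $V_r^*V_r = I_k$ and $V_rV_r^* = P_r$. Then $V = [V_1\mid V_2\mid\cdots\mid V_p]\in M_{n,kp}$ satisfies $V^*V = I_{kp}$, i.e., $V\in\cP_{kp}$, because mutual orthogonality of the $\mathcal V_r$ forces $V_r^*V_s = \delta_{rs}I_k$. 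Conversely, any $V\in\cP_{kp}$ splits into $p$ blocks of width $k$ whose column spaces form such a collection.

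First I would argue the forward direction. Starting from a hybrid $(k:p)$ code, the conditions $P_rE_i^*E_jP_s = \delta_{rs}\lambda_{ij}^{(r)}P_r$ become, upon sandwiching by $V_r^*$ on the left and $V_s$ on the right and using $V_r^*P_r = V_r^*$ and $P_sV_s = V_s$, the block equations $V_r^*E_i^*E_jV_s = \delta_{rs}\lambda_{ij}^{(r)}I_k$. Setting $D_{ij} = \operatorname{diag}(\lambda_{ij}^{(1)},\ldots,\lambda_{ij}^{(p)})$, one reads off $V^*(E_i^*E_j)V = D_{ij}\otimes I_k$ for every pair $(i,j)$, so the tuple $(D_{ij})_{i,j}$ lies in $\Lambda_{(k:p)}(E_1^*E_1,\ldots,E_r^*E_r)$, which is therefore non-empty.

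For the converse I would start from a point in the matricial range: a $V\in\cP_{kp}$ with $V^*(E_i^*E_j)V = D_{ij}\otimes I_k$ for all $i,j$. Splitting $V = [V_1\mid\cdots\mid V_p]$ into blocks of width $k$, the identity $V^*V = I_{kp}$ forces $V_r^*V_s = \delta_{rs}I_k$, so the column spaces $\mathcal V_r$ of $V_r$ are mutually orthogonal $k$-dimensional subspaces with projections $P_r = V_rV_r^*$. Since the $(r,s)$ block of $V^*(E_i^*E_j)V$ is $V_r^*E_i^*E_jV_s$ while the $(r,s)$ block of $D_{ij}\otimes I_k$ is $\delta_{rs}(D_{ij})_{rr}I_k$, equating these and then multiplying on the left by $V_r$ and on the right by $V_s^*$ recovers Eq.~(\ref{testablematrixalg}) with $\lambda_{ij}^{(r)} = (D_{ij})_{rr}$. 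The main thing to be careful about is fixing the tensor-ordering convention so that $D\otimes I_k$ corresponds to the block-diagonal matrix $\operatorname{diag}(d_1I_k,\ldots,d_pI_k)$ relative to the chosen splitting of $V$, but given that convention no step presents an obstacle.
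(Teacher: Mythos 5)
Your proof is correct and matches the paper's intended argument: the paper leaves the lemma as ``readily verified'' from Eq.~(\ref{testablematrixalg}) and its follow-up remark defining $P_r$ from blocks of the rank-$kp$ projection is exactly your dictionary between a partial isometry $V=[V_1|\cdots|V_p]\in\cP_{kp}$ and mutually orthogonal $k$-dimensional subspaces with $P_r=V_rV_r^*$. Your careful block-by-block translation between $V^*(E_i^*E_j)V=D_{ij}\otimes I_k$ and $P_rE_i^*E_jP_s=\delta_{rs}\lambda_{ij}^{(r)}P_r$ is precisely the bookkeeping the authors had in mind.
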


We note that given a rank-$kp$ projection, with say $P = \sum_{i=1}^{kp} \kb{e_i}{e_i}$, and diagonal matrices $D_j$ that make $\Lambda_{(k:p)}$ nonempty, we may define the desired projections for $1 \leq r \leq p$, by $P_r = \sum_{i=1}^{k} \kb{e_{(r-1)k+ i}}{e_{(r-1)k+ i}}$.

\begin{thm}  \label{mainhybridqec}
Let $\Phi$ be a quantum channel as defined in Eq.~(\ref{channel}) with Choi rank equal to $c$. Then $\Phi$ has a hybrid error correcting code of dimensions $(k:p)$ if
 $$ \dim \mathcal H \ge c^2(c^2(k-1)+k(p-1)).$$
\end{thm}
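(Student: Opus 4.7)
The plan is to combine the preceding Lemma with Theorem~\ref{main1}, after a linear-algebraic reduction that replaces the $c^2$ operator products $E_i^{*}E_j$ by a tuple of Hermitian matrices of size at most $c^2-1$ to which Theorem~\ref{main1} applies directly.

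First, by the preceding Lemma it suffices to show that the joint matricial range $\Lambda_{(k:p)}$ of the $c^2$ products $E_i^{*}E_j$ (for $1\le i,j\le c$) is non-empty. These products are generally not Hermitian, so I would pass to a Hermitian tuple via the reductions recorded after Definition~\ref{maindefn}. Observe that
\[
S \;=\; \mathrm{span}_{\mathbb{C}}\{\, E_i^{*}E_j : 1\le i,j\le c \,\}
\]
is closed under the adjoint, since $(E_i^{*}E_j)^{*} = E_j^{*}E_i \in S$. Consequently the real subspace $S_{\mathrm{h}}\subseteq H_n$ of Hermitian elements of $S$ has real dimension equal to $\dim_{\mathbb{C}} S \le c^2$. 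Moreover, the trace-preservation identity $\sum_{i=1}^c E_i^{*}E_i = I_n$ shows $I_n \in S_{\mathrm{h}}$.

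Next, I would choose Hermitian $A_1,\dots,A_m\in S_{\mathrm{h}}$ with $m\le c^2-1$ such that $\{A_1,\dots,A_m, I_n\}$ is a basis of $S_{\mathrm{h}}$. Each product $E_i^{*}E_j$ is then a complex linear combination of $A_1,\dots,A_m, I_n$. If $V\in\cP_{kp}$ satisfies $V^{*}A_\ell V = D_\ell\otimes I_k$ with $D_\ell\in\cD_p$ for every $\ell$, then, since $V^{*}V=I_{kp}$ gives $V^{*}I_n V = I_p\otimes I_k$, the same complex combinations produce diagonal $D_{ij}\in\cD_p$ with $V^{*}(E_i^{*}E_j)V = D_{ij}\otimes I_k$. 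Therefore non-emptiness of $\Lambda_{(k:p)}(A_1,\dots,A_m)$ forces non-emptiness of the joint matricial range of the products $E_i^{*}E_j$, and by the preceding Lemma $\Phi$ has a hybrid $(k:p)$ code.

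Finally I would invoke Theorem~\ref{main1} applied to $(A_1,\dots,A_m)\in H_n^m$ with $m+1\le c^2$. The stated hypothesis
\[
\dim\mathcal H \;\ge\; c^2\bigl(c^2(k-1)+k(p-1)\bigr) \;\ge\; (m+1)\bigl((m+1)(k-1)+k(p-1)\bigr)
\]
then forces $\Lambda_{(k:p)}(A_1,\dots,A_m)\ne\emptyset$, completing the argument. The only nontrivial step is the reduction from the non-Hermitian products to a Hermitian generating set of size at most $c^2-1$; the key ingredients there are the $*$-closedness of $S$ and the exploitation of $\sum_i E_i^{*}E_i=I_n$ to absorb one degree of freedom into the identity, so that the effective parameter in Theorem~\ref{main1} is $m+1\le c^2$ rather than $c^2+1$.
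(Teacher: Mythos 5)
Your proposal is correct and follows essentially the same route as the paper: both reduce the $c^2$ products $E_i^*E_j$ to a Hermitian spanning set containing $I_n$ (the paper via the explicit Hermitian/anti-Hermitian parts $F_{j\ell}$, you via $*$-closedness of the span), so that at most $m\le c^2-1$ Hermitian matrices beyond the identity remain, and then apply Theorem~\ref{main1} together with the lemma characterizing hybrid codes by non-emptiness of $\Lambda_{(k:p)}$ of the products. Your write-up just makes explicit the step, left implicit in the paper, that a partial isometry compressing $A_1,\dots,A_m$ and $I_n$ to the required form also compresses every complex combination $E_i^*E_j$ correctly.
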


\begin{proof}
Suppose  $\{ E_1, \ldots , E_c\}$ is a minimal set of Kraus operators  that implement $\Phi$ as in (\ref{channel}).  For  $1\le j< \ell\le c$, let 
$F_{j\ell}=\frac{1}{2}(E_j^* E_\ell+E_\ell^* E_j)$ and $F_{\ell j}=\frac{1}{2i}(E_j^* E_\ell-E_\ell^* E_j)$. Also, let $F_{jj}= E_j^* E_j$ for $1\le j \le c$. 
Since $\sum_{j=1}^cE_j^*E_j=I$, the operator subspace $\spn \{ F_{j\ell}:1\le j,\  \ell\le c\}$ has a basis $\{A_0=I,\dots,A_m\}$ with $m\le  c^2-1$.  The result now follows from an application of Theorem~\ref{main1}.
\end{proof}

Theorem \ref{mainhybridqec} is useful if we have no information about the $E_i'$s, except the number $c$. If the $E_i'$s are given, we may get  a hybrid code even when $n$ is lower than the bound given in Theorem \ref{main1} or  \ref{mainhybridqec}. The saving can come from two sources: 1) The   subspace spanned by $\{E_i^*E_j:1\le i,j\le c\}$ can have dimension (over $\mathbb{R}$) smaller than $c^2$ in particular when restricted to the code, or 2) the  operators $\{E_i^*E_j\}$ have some specific structures. 
We give some examples to demonstrate this.

\begin{example}\label{x2}
 Consider the error model on a three-qubit system
    \[   \Phi(\rho) =   p (X_2 \rho X_2) + (1 - p) \rho, \]
where $X_2 = I \otimes X \otimes I$ and $X$ is the Pauli bit flip operator and $0 < p < 1$ is some fixed probability. It is not hard to see that the codes $C_1 = \mbox{span}\, \{ |000 \rangle , |001\rangle    \} $ and $C_2 = \mbox{span}\, \{ |100 \rangle , |101\rangle    \} $ together form a correctable hybrid code for $\Phi$. One would seek to examine the matricial range $$\Lambda_{(k:p)} (E_1^*E_1, E_1^* E_2, E_2^* E_1, E_2^*E_2) = \Lambda_{(k:p)}(I,X_2,X_2,I).$$
By the above reduction to linearly independent sets of Kraus operators, we would be interested in the geometry of $\Lambda_{(k:p)}(X_2)$. Since $X_2$ is unitarily similar to $I_4\oplus -I_4$,  $\Lambda_{(4:2)}(X_2) =\{\diag(1,-1)\}$. Thus, for  this example,  we have $m=1, k=4, p=2, n = 8 and c=2$.  
\end{example}

\begin{example}\label{xyz1}

  Consider the quantum channel on a three-qubit
system given by
    \[ \Phi(\rho) = p_0 \rho + p_1 X^{\otimes 3}\rho {X^{\otimes 3}}^*+  p_2 Y^{\otimes 3}\rho {Y^{\otimes 3}}^* +  p_3 Z^{\otimes 3}\rho {Z^{\otimes 3}}^*, \]
where $p_0, \dots , p_3$ are probabilities summing to $1$ and
$X^{\otimes 3} = X \otimes X \otimes X$ etc, with the Pauli
matrices $X,Y,Z$.

In this case the relevant operators $E_i^* E_j$ form the 3-tuple
$(X^{\otimes 3}, Y^{\otimes 3}, Z^{\otimes 3})$, and we set $m=3$,
$k=4, p=1$. Defining a partial isometry $V : \mathbb{C}^4 \rightarrow
\mathbb{C}^2 \otimes \mathbb{C}^2 \otimes \mathbb{C}^2$ by
\[
V = \kb{000}{00} + \kb{011}{01} + \kb{101}{10} + \kb{110}{11},
\]
one can verify that $V^* V = I_4$ and
\begin{eqnarray*}
V^* (X^{\otimes 3}, Y^{\otimes 3}, Z^{\otimes 3}) V &=& (0_2 \otimes
I_2, 0_2 \otimes I_2,  I_2 \otimes I_2) \\ &=& (0_4, 0_4,I_4).
\end{eqnarray*}

Therefore, $\Lambda_4 (X^{\otimes 3}, Y^{\otimes 3}, Z^{\otimes 3}) \neq \emptyset$. However,  $\Lambda_{(4:2)} (X^{\otimes 3}, Y^{\otimes 3}, Z^{\otimes 3})= \emptyset$ because 
  $X^{\otimes 3}$ and $ Y^{\otimes 3}$ do not commute. 
\end{example}

\begin{example}\label{xyz2} Extend the previous example  to  a four-qubit
system given by
    \[ \Psi(\rho) = p_0 \rho + p_1 X^{\otimes 4}\rho {X^{\otimes 4}}^*+  p_2 Y^{\otimes 4}\rho {Y^{\otimes 4}}^* +  p_3 Z^{\otimes 4}\rho {Z^{\otimes 4}}^*, \]
where $p_0, \dots , p_3$ are probabilities summing to $1$. 

In this case the relevant operators $E_i^* E_j$ form the 3-tuple
$(X^{\otimes 4}, Y^{\otimes 4}, Z^{\otimes 4})$, and we set $m=3$. We are going to show that there is a unitary matrix $U\in M_{16}$ such that 
\begin{equation}\label{xyz3}
U^*X^{\otimes 4}U = D_X\otimes I_4,\quad U^*Y^{\otimes 4}U = D_Y\otimes I_4,\quad U^*Z^{\otimes 4}U = D_Z\otimes I_4,  
\end{equation}
for some diagonal matrices $D_X,D_Y,D_Z\in M_4$. Hence, we will have $\Lambda_{(4:4)} (X^{\otimes 4}, Y^{\otimes 4}, Z^{\otimes 4}) \neq \emptyset$. In this case, 
$k=4, p=4$ and $n=16=kp$. Thus, the smallest possible $n$ is also achieved.

For $J=(j_1j_2j_3j_4)\in \{0,1\}^4$, let $\ket{J}=\ket{j_1j_2j_3j_4}$ and $|J|=\sum_{i=1}^4j_i$.  Since $Y_4\ket{J}=(-1)^{|J|}X_4\ket{J}$, we have

\begin{equation}\label{xyz4}
\begin{array}{rl}
X_4(\ket{J} +X_4\ket{J})&=\ket{J} +X_4\ket{J}\\&\\
X_4(\ket{J} -X_4\ket{J})&=-(\ket{J} -X_4\ket{J})\\&\\
Y_4(\ket{J} +X_4\ket{J})&=\left\{\begin{array}{ll} \ket{J} +X_4\ket{J}&\mbox{ if } |J|   \mbox{ is  even}  \\&\\
-( \ket{J} +X_4\ket{J})&\mbox{ if } |J|   \mbox{ is  odd}  \end{array}\right.\\&\\
Y_4(\ket{J} -X_4\ket{J})&=\left\{\begin{array}{ll}-(\ket{J} -X_4\ket{J})&\mbox{ if } |J|   \mbox{ is  even}  \\&\\
( \ket{J} -X_4\ket{J})&\mbox{ if } |J|   \mbox{ is  odd}  \end{array}\right.
\end{array}
\end{equation}

Define a unitary matrix $U=\frac{1}{2}[u_1\cdots u_{16}]$ with columns given by
$$\begin{array}{rl}
u_{1 }=&(\ket{0000  }+\ket{1111  })+(\ket{0011  }+\ket{ 1100 })\\&\\
u_{2 }=&(\ket{0000  }+\ket{1111  })-(\ket{0011  }+\ket{ 1100 })\\&\\
u_{ 3}=&(\ket{0101  }+\ket{ 1010 })+(\ket{0110  }+\ket{1001  })\\&\\
u_{4 }=&(\ket{0101  }+\ket{ 1010 })-(\ket{0110  }+\ket{1001  })\\&\\
u_{5 }=&(\ket{0001  }+\ket{1110  })+(\ket{0010  }+\ket{ 1101 })\\&\\
u_{ 6}=&(\ket{0001  }+\ket{1110  })-(\ket{0010  }+\ket{ 1101 })\\&\\
u_{7 }=&(\ket{0100  }+\ket{1011  })+(\ket{0111  }+\ket{ 1000 })\\&\\
u_{8 }=&(\ket{0100  }+\ket{1011  })-(\ket{0111  }+\ket{ 1000 })\\&\\
u_{9}=&(\ket{0000  }-\ket{1111  })+(\ket{0011  }-\ket{ 1100 })\\&\\
u_{10 }=&(\ket{0000  }-\ket{1111  })-(\ket{0011  }-\ket{ 1100 })\\&\\
u_{11}=&(\ket{0101  }-\ket{ 1010 })+(\ket{0110  }-\ket{1001  })\\&\\
u_{12}=&(\ket{0101  }-\ket{ 1010 })-(\ket{0110  }-\ket{1001  })\\&\\
u_{13 }=&(\ket{0001  }-\ket{1110  })+(\ket{0010  }-\ket{ 1101 })\\&\\
u_{ 14}=&(\ket{0001  }-\ket{1110  })-(\ket{0010  }-\ket{ 1101 })\\&\\
u_{15 }=&(\ket{0100  }-\ket{1011  })+(\ket{0111  }-\ket{ 1000 })\\&\\
u_{16 }=&(\ket{0100  }-\ket{1011  })-(\ket{0111  }-\ket{ 1000 }) 
\end{array}\,.
$$
Since, $Z_4=X_4Y_4$, by (\ref{xyz4}), we have (\ref{xyz3}) with 
\begin{equation}\label{xyz5}D_X=\diag(1,1,-1,-1),\ D_Y=\diag(1,-1,-1,1)\ \mbox{ and }\ D_Z=\diag(1,-1,1,-1)\,.\end{equation}

\end{example}

\begin{remark} 
More generally, one can consider the class of correlation channels
studied in \cite{li2011efficient}, which has error operators
$X^{\otimes n}$, $Y^{\otimes n}$, $Z^{\otimes n}$ normalized with
probability coefficients. It is proved there that when $n$ is odd,  $\Lambda_{2^{n-1}}\left(X^{\otimes n} ,  Y^{\otimes n} ,  Z^{\otimes n}\right)\neq \emptyset$. Thus 
$n$ qubit
codewords encode $(n-1)$ data qubits when $n$ is odd.  When $n$ is even, it follows that   $\Lambda_{2^{n-2}}\left(X^{\otimes n} ,  Y^{\otimes n} ,  Z^{\otimes n}\right)\neq \emptyset$.
Using a proof similar to the above example, we can show that $\Lambda_{(2^{n-2}:4)}\left(X^{\otimes n} ,  Y^{\otimes n} ,  Z^{\otimes n}\right)=\{(D_X,D_Y,D_Z)\}$, with  $D_X,D_Y,D_Z$ given by (\ref{xyz5}).  
It has been proven in \cite{li2011efficient} that  for $n$ even, 
$\Lambda_{2^{n-1}}\left(X^{\otimes n} ,  Y^{\otimes n} ,  Z^{\otimes n}\right)= \emptyset$. 
Actually, we can show that 
$\Lambda_{k}\left(X^{\otimes n} ,  Y^{\otimes n} ,  Z^{\otimes n}\right)= \emptyset$ for all 
$k>2^{n-2}$. Therefore, we can encode at most $n-2$ qubits.  Using the hybrid code, we can get $2$ 
additional classical bits. 
Very recently, this scheme has been implemented using 
IBM's quantum computing framework qiskit \cite{lilylespoon}. 
\end{remark}

\section{Exploring Advantages of Hybrid Quantum Error Correction} \label{advantage}

A straightforward way to form hybrid codes is to use quantum codes to directly transmit classical information. 
However, it is impractical since quantum resources are more expensive than classical resources.
Thus, realistically, hybrid codes are more interesting when the simultaneous transmission of classical information and quantum information do possess advantages.
One of such situations is, with a fixed set of operators $\mathbf{A}$, hybrid quantum error correcting codes exist but the corresponding quantum codes do not exist for the same system dimension $n$, i.e. $\Lambda_{(k:p) }(A) \ne \emptyset$ and $\Lambda_{kp}(A) = \emptyset$.


\begin{prop} \label{1.1}
 Suppose $A$ is an $n\times n$ Hermitian matrix with eigenvalues $a_1\ge a_2\ge\cdots\ge a_n$. Then
$$\begin{array}{rl}\Lambda_{k p }(A) =&\left\{t:a_{n+1-kp}\le  t\le a_{kp}\right\}\\&\\
\Lambda_{(k:p) }(A) =&\left\{(t_1,\dots,t_p):  a_{ik}\le t_{[i]}\le a_{n+1-(p-i+1)k}  \mbox{ for }  1\le i\le  p \right\},
\end{array}
$$
where here, $t_{[ 1]}\ge  t_{[2 ]}\ge  \cdots\ge  t_{[ n]}  $ is  a rearrangement of $t_{1 }\, t_{2 }, \cdots,  t_{ n}   $ in decreasing order.
\end{prop}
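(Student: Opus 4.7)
The plan is to derive the characterization directly from the Fan--Pall interlacing theorem (cited in Remark~(ii)). A tuple $(t_1,\dots,t_p)$ lies in $\Lambda_{(k:p)}(A)$ exactly when there is an isometry $V\in\cP_{kp}$ with $V^*AV=\diag(t_1,\dots,t_p)\otimes I_k$. The right-hand side is a Hermitian matrix whose spectrum consists of the values $t_{[1]}\ge\cdots\ge t_{[p]}$, each occurring with multiplicity exactly $k$. A first observation is that this membership question reduces to a purely spectral one: it suffices to produce any $V_0\in\cP_{kp}$ whose compression $V_0^*AV_0$ has the prescribed eigenvalue multiset, since one may diagonalise $V_0^*AV_0$ and absorb the resulting unitary (together with a block permutation that reorders the diagonal) into $V_0$, obtaining a new isometry whose compression is the desired $\diag(t_1,\dots,t_p)\otimes I_k$.

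Next I would invoke Fan--Pall, which asserts that a weakly decreasing real sequence $\mu_1\ge\cdots\ge\mu_{kp}$ arises as the spectrum of some compression $V_0^*AV_0$ with $V_0^*V_0=I_{kp}$ if and only if the interlacing inequalities
\[
a_j\ \ge\ \mu_j\ \ge\ a_{n-kp+j},\qquad 1\le j\le kp,
\]
all hold. I would apply this with the block-constant sequence $\mu_{(i-1)k+j}=t_{[i]}$ for $1\le j\le k$ and $1\le i\le p$. On the $i$-th block the upper bound $\mu_j\le a_j$ is strongest when $j=ik$, yielding $t_{[i]}\le a_{ik}$; the lower bound $\mu_j\ge a_{n-kp+j}$ is strongest when $j=(i-1)k+1$, yielding $t_{[i]}\ge a_{n-kp+(i-1)k+1}=a_{n+1-(p-i+1)k}$. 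Fan--Pall gives both directions at once, so these are exactly the necessary and sufficient constraints. The $p=1$ case collapses to the single pair of inequalities $a_{n+1-kp}\le t\le a_{kp}$, recovering the first formula for $\Lambda_{kp}(A)$.

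The main obstacle, such as it is, is purely bookkeeping: identifying which of the $k$ interlacing inequalities applying inside each block of constant $\mu$-value is the binding one, and keeping the indices $(i-1)k+1,\dots,ik$ aligned with the shifted lower bounds $a_{n-kp+j}$. As a consistency check one can note that the nonemptiness condition $a_{ik}\ge a_{n+1-(p-i+1)k}$ for every $i$ simplifies to $n\ge (p+1)k-1$, matching the sharp value $n(k:p,1)=(p+1)k-1$ recorded in Remark~(ii).
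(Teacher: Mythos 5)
Your argument is correct and is essentially the paper's own proof: the paper likewise deduces the second formula from the Fan--Pall interlacing theorem (citing the known rank-$kp$ numerical range description for the first line, which you recover as the $p=1$ case of the same computation), and your preliminary reduction to the spectrum of the compression is just the routine step the paper leaves implicit. One remark: your binding inequalities come out as $a_{n+1-(p-i+1)k}\le t_{[i]}\le a_{ik}$, which is the form consistent with the first line and with Fan--Pall as stated; the proposition as displayed has these two bounds transposed, evidently a typo.
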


\begin{proof} 
The first statement follows from \cite{choi2006higher2}. For the second, by a result of Fan and Pall \cite{FP}, $b_1\ge b_2\cdots \ge b_m$ are the eigenvalues of $U^*AU$ for some $n\times m$ matrix $U$ satisfying $U^*U=I_m$ if and only if
$$a_i\ge b_i\ge a_{n-m+i}\quad\forall \, 1\le i\le m,$$
from which the result follows. 
\end{proof}

\begin{remark}
$(i)$ If we require the components $(t_1,\dots,t_p)$ in $\Lambda_{(k:p) }(A)$ to be in decreasing order,  then the ``ordered'' $\Lambda_{(k:p) }(A)$ is convex.

$(ii)$ $\Lambda_{kp}(A)=[a_{n+1-kp}, a_{kp}]$ is obtained by taking the convex hull of the eigenvalues of $A$ after
 deleting the $(n-kp+1)$ largest and
smallest eigenvalues. The following proposition is a generalization of this result.
\end{remark}

\begin{prop}\label{1.2}  Suppose $A_i=\diag(a^i_1,a^i_2,\dots,a^i_n)$ for  $i=1,\dots,m$ with $a^i_j\in \mathbb{R}$. Let $\ba_j=(a^1_j,a^2_j,\dots,a^m_j)$ for $j=1,\dots,n$.
For  $S\subseteq \{1,\dots,n\}$, let $X_S=\conv \{\ba_j:j\in S\}$. Then for every $1\le k\le n$,
\begin{equation}\label{eq1}
\Lambda_{k }(\bA)\subseteq \cap\{ X_S:S\subset \{1,2,\dots,n\},\ |S|=n-k+1\}. 
\end{equation}
\end{prop}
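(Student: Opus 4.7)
The plan is to use separating hyperplanes to reduce the multi-operator claim to the single-operator $m=1$ case, which is already given by Proposition~\ref{1.1}. The underlying observation is that $\Lambda_k(\bA)$ is ``coordinate-linear'' in $\bA$: if $V$ is an $n \times k$ partial isometry realizing $(c_1,\dots,c_m)\in\Lambda_k(\bA)$ via $V^*A_iV = c_i I_k$, then for any $\alpha=(\alpha_1,\dots,\alpha_m)\in\IR^m$ the operator $A_\alpha:=\sum_i \alpha_i A_i$ satisfies $V^*A_\alpha V = (\alpha\cdot\mathbf{c})I_k$, so $\alpha\cdot\mathbf{c} \in \Lambda_k(A_\alpha)$. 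Because each $A_i$ is diagonal, $A_\alpha$ is diagonal with eigenvalues exactly $\{\alpha\cdot\ba_j : 1\le j\le n\}$.

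The main step is a proof by contradiction. Suppose $\mathbf{c}=(c_1,\dots,c_m)\in\Lambda_k(\bA)$ but $\mathbf{c}\notin X_S=\conv\{\ba_j:j\in S\}$ for some $S\subset\{1,\dots,n\}$ with $|S|=n-k+1$. Since $X_S$ is a closed convex set, by the separating hyperplane theorem there exists $\alpha\in\IR^m$ such that
\[
\alpha\cdot\mathbf{c} \;>\; \alpha\cdot\ba_j \quad\text{for every } j\in S.
\]
Thus strictly more than $n-k$ of the eigenvalues $\alpha\cdot\ba_j$ of $A_\alpha$ are strictly less than $\alpha\cdot\mathbf{c}$, so at most $k-1$ of them are $\ge\alpha\cdot\mathbf{c}$. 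In the decreasing ordering of the eigenvalues of $A_\alpha$, this forces the $k$th largest to be strictly smaller than $\alpha\cdot\mathbf{c}$.

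On the other hand, Proposition~\ref{1.1} applied to $A_\alpha$ (the $p=1$ case) gives $\Lambda_k(A_\alpha)=[b_{n+1-k},b_k]$, where $b_1\ge\cdots\ge b_n$ are the eigenvalues of $A_\alpha$ in decreasing order. Since $\alpha\cdot\mathbf{c}\in\Lambda_k(A_\alpha)$, we obtain $\alpha\cdot\mathbf{c}\le b_k$, contradicting the previous bound $b_k<\alpha\cdot\mathbf{c}$. This contradiction establishes the inclusion \eqref{eq1}.

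The only subtle point, and the step I would double-check carefully, is the counting argument translating the strict inequality on $|S|=n-k+1$ points into the statement $b_k<\alpha\cdot\mathbf{c}$; everything else is a direct combination of the separation theorem with the single-operator result from Proposition~\ref{1.1}. No convexity or compactness hypothesis on $\Lambda_k(\bA)$ is needed beyond what this elementary argument provides.
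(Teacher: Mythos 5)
Your proof is correct, and the key step you flagged is fine: by strict separation of the point $\mathbf{c}$ from the compact convex set $X_S$ there is $\alpha\in\IR^m$ with $\alpha\cdot\mathbf{c}>\alpha\cdot\ba_j$ for all $j\in S$, so at least $n-k+1$ diagonal entries of $A_\alpha=\sum_i\alpha_iA_i$ lie strictly below $\alpha\cdot\mathbf{c}$, hence at most $k-1$ lie at or above it, forcing $b_k<\alpha\cdot\mathbf{c}$; combined with $\alpha\cdot\mathbf{c}\in\Lambda_k(A_\alpha)\subseteq\{t: t\le b_k\}$ from Proposition~\ref{1.1} (with the compression identity $V^*A_\alpha V=(\alpha\cdot\mathbf{c})I_k$ justifying membership), this is a genuine contradiction. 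Note, though, that your route differs from the paper's. The paper argues directly, without duality: taking $S=\{1,\dots,n-k+1\}$ (enough by permutation symmetry), a point $\x\in\Lambda_k(\bA)$ comes with a rank-$k$ projection $P$ satisfying $PA_iP=x_iP$; since the range of $P$ has dimension $k$ and $\spn\{e_j:j\in S\}$ has dimension $n-k+1$, the two subspaces intersect nontrivially, and any unit vector $\w$ in the intersection gives $x_i=\w^*PA_iP\w=\w^*A_i\w=\sum_{j\in S}|w_j|^2a^i_j$, exhibiting $\x=\sum_{j\in S}|w_j|^2\ba_j\in X_S$ explicitly. What your argument buys is a reduction of the $m$-tuple statement to the known single-matrix interval description of $\Lambda_k$ (essentially eigenvalue interlacing), so you never construct the convex combination; what the paper's argument buys is self-containedness and slightly more information: a one-line dimension count produces the weights $|w_j|^2$ of the convex combination directly from the code space, with no need for Proposition~\ref{1.1}, compactness of $X_S$, or the separating hyperplane theorem.
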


\begin{proof} It suffices to prove that $\Lambda_{k }(\bA)\subset X_S$ for $S=\{1,2,\dots,n-k+1\}$.
Suppose we have $\x=(x_1,x_2\dots,x_m)\in \Lambda_{k }(\bA)$. Then there exists a rank $k$ projection $P$ such that $PA_iP=x_iP$  for  $i=1,\dots,m$. Consider the subspace $W=\spn\{e_1,\dots,e_{n-k+1}\}$. Then there exists a unit vector $\w=(w_1,\dots,w_n)^t\in \mathbb{R}^n$ such that $Pw=\w$. Therefore, for $1\le i\le m$,
$$x_i=x_i\w^*\w=x_i\w^*P\w=\w^*PA_iP\w=w^*A_i\w=\sum_{j=1}^{n-k+1}|w_j|^2a^i_j\,' . $$
Hence, $\x=\sum_{j=1}^{n-k+1}|w_j|^2\ba_j\in X_S$.
\end{proof}

By  the result in \cite{li2008canonical}, equality holds in (\ref{eq1}) for $m =1,2$. For $m>2$, $\Lambda_{k }(\bA)$ may not be convex and equality may not hold.

\begin{prop}\label{1.3}   Let $A_i$, $ 1\le i\le m$ be as given in Proposition \ref{1.2}. Then we have: 
\begin{itemize}
\item[(1)] If $n\ge (m+1)k-m$, then $\Lambda_{k}(\bA) \neq \emptyset$. The bound $(m+1)k-m$ is best possible; i.e., if $n<(m+1)k-m$, there exist real diagonal matrices $A_1,\dots,A_m$ such that $\Lambda_{k}(\bA)=\emptyset$.

\item[(2)]  If $n\ge p( (m+1)k -m)$, then $\Lambda_{(k:p)}(\bA) \neq \emptyset$.
\end{itemize}
\end{prop}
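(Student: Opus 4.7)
I intend to prove (1) using Tverberg's theorem from combinatorial geometry, and then derive (2) from (1) by a block decomposition that exploits diagonality of the $A_i$.

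\textbf{Proof of (1), existence.} View the $n$ columns of the tuple as points $\ba_j=(a^1_j,\ldots,a^m_j)\in\mathbb{R}^m$. Tverberg's theorem states that any $(m+1)(k-1)+1=(m+1)k-m$ points in $\mathbb{R}^m$ admit a partition into $k$ subsets whose convex hulls share a common point. Apply it to the first $(m+1)k-m$ of the $\ba_j$ (the remaining rows will be handled below by padding with zeros): this produces disjoint index sets $S_1,\ldots,S_k$ and nonnegative weights $\lambda_j^{(\alpha)}$ with $\sum_{j\in S_\alpha}\lambda_j^{(\alpha)}=1$ for each $\alpha$, together with a common image $\x=\sum_{j\in S_\alpha}\lambda_j^{(\alpha)}\ba_j$ independent of $\alpha$. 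Define unit vectors
\[
v_\alpha=\sum_{j\in S_\alpha}\sqrt{\lambda_j^{(\alpha)}}\,e_j,\qquad \alpha=1,\ldots,k.
\]
Disjointness of the supports makes $\{v_\alpha\}$ orthonormal, while diagonality of each $A_i$ forces $v_\alpha^*A_iv_\beta=\delta_{\alpha\beta}\sum_{j\in S_\alpha}\lambda_j^{(\alpha)}a^i_j=\delta_{\alpha\beta}\,x_i$. Setting $V=[v_1|\cdots|v_k]\in M_{n,k}$ yields $V^*V=I_k$ and $V^*A_iV=x_iI_k$, placing $(x_1,\ldots,x_m)\in\Lambda_k(\bA)$.

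\textbf{Proof of (1), sharpness.} It is enough to handle $n=(m+1)(k-1)$, since smaller $n$ can be reduced to this case by padding a rank-$k$ partial isometry with zero rows. I would place the $\ba_j$ on a Tverberg-sharp configuration, e.g.\ the moment curve $\ba_j=(j,j^2,\ldots,j^m)$. By \propref{1.2}, $\Lambda_k(\bA)\subseteq\bigcap_{|S|=n-k+1}X_S$, and it remains to verify that this intersection is empty for the chosen configuration; this is a classical consequence of the sharpness of Tverberg's bound, since a common point of all $X_S$ with $|S|=n-k+1$ would, by a pigeonhole argument on the missing $k-1$ indices, yield a Tverberg $k$-partition. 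As a sanity check, when $m=1$ the argument collapses to $A_1=\diag(1,2,\ldots,2k-2)$, whose rank-$k$ numerical range is empty by \propref{1.1}.

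\textbf{Proof of (2).} Partition $\{1,\ldots,n\}$ into $p$ consecutive blocks $B_1,\ldots,B_p$ each of cardinality at least $(m+1)k-m$; this is possible by the hypothesis $n\ge p((m+1)k-m)$. Restrict the $A_i$ to each block $B_r$ and apply part (1) to obtain an $n\times k$ partial isometry $V_r$ supported in $B_r$ and real scalars $x_i^{(r)}$ with $V_r^*A_iV_r=x_i^{(r)}I_k$. Since the blocks are disjoint and each $A_i$ is diagonal, the concatenation $V=[V_1|\cdots|V_p]\in M_{n,kp}$ satisfies $V^*V=I_{kp}$ and $V_r^*A_iV_s=0$ for $r\neq s$. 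Hence
\[
V^*A_iV=\diag\bigl(x_i^{(1)}I_k,\ldots,x_i^{(p)}I_k\bigr)=D_i\otimes I_k,\qquad D_i=\diag(x_i^{(1)},\ldots,x_i^{(p)}),
\]
so $(D_1,\ldots,D_m)\in\Lambda_{(k:p)}(\bA)$.

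\textbf{Main obstacle.} The crucial insight is recognizing Tverberg's theorem as the exact combinatorial-geometric tool tailored to the problem: the theorem delivers a partition whose disjoint supports automatically discharge the off-diagonal orthogonality constraints, thanks to diagonality of the $A_i$. The technically most delicate piece is the sharpness half of (1), where one must choose a configuration generic enough that even the necessary condition from \propref{1.2} fails; the moment-curve construction is the natural candidate, but the verification ultimately rests on the classical sharpness of Tverberg's bound.
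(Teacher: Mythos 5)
Your existence arguments are fine and coincide with the paper's: part (1) is exactly the Tverberg-plus-disjoint-supports construction (the weights $\sqrt{\lambda_j^{(\alpha)}}$ on disjoint index sets give an orthonormal family, and diagonality kills the off-diagonal compressions), and part (2) is the same block decomposition $A_i=\oplus_{j=1}^p A_i^j$ used in the paper. The reduction of the sharpness claim to $n=(m+1)(k-1)$ by padding a partial isometry with zero rows is also valid.

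The genuine gap is in the sharpness half of (1). Your argument needs the \emph{core} $\bigcap_{|S|=n-k+1}X_S$ to be empty (only then does \propref{1.2} force $\Lambda_k(\bA)=\emptyset$), and you derive this from two unproved assertions: (i) that $(m+1)(k-1)$ points on the moment curve admit no Tverberg $k$-partition, and (ii) that a common point of all $X_S$ with $|S|=n-k+1$ would ``by pigeonhole'' produce a Tverberg $k$-partition. Claim (i) is not a classical fact; the standard sharpness examples are points in \emph{strongly} general position, and checking that a concrete algebraic configuration such as the moment curve qualifies is itself a nontrivial verification (already for $m=2$, $k=3$ one must rule out concurrency of chords for all $15$ pairings). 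Claim (ii) is the serious problem: it asserts that Tukey depth at least $k$ yields a Tverberg $k$-partition, and the pigeonhole sketch does not go through, because a Carath\'eodory representation of the common point may use up to $m+1$ indices while you are only allowed to delete $k-1$ indices at a time, so you cannot force the representing sets to be disjoint; this implication is not available as a black box and should not be assumed. The paper avoids both issues by exhibiting an explicit configuration (see Example~\ref{eg1} with $p=1$): place $k-1$ coincident points at each of the vertices $0,e_1,\dots,e_m$ of a simplex; removing one cluster at a time, \propref{1.2} puts $\Lambda_k(\bA)$ inside the intersection of the facets of a simplex, which is empty. Replacing your moment-curve argument by this (or any configuration for which the emptiness of the core is verified directly) closes the gap.
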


\begin{proof} The statement (1)  follows from Tverberg's Theorem \cite{Tv} and Proposition \ref{1.2}. (Also, see Example \ref{eg1}.)

For (2), note that if $n\ge p((m+1)k -m)$, we can decompose each $A_i=\oplus_{j=1}^pA_i^j$ with $A_i^j\in M_{n_j}$, and $n_j\ge (m+1)k -m$. Then, by the result in 1), $\Lambda_{k }(A_1^j,\dots,A_m^j)\neq \emptyset$ and  the result follows. 
\end{proof}

\begin{remark}
By the above proposition, for  $p((m+1)k -m)\le n< (m+1)kp-m$, we can construct $A_1 ,\dots,A_m$ such that $\Lambda_{kp}(\bA) = \emptyset$ and   $\Lambda_{(k:p)}(\bA) \neq \emptyset$.
\end{remark}

\begin{example}\label{eg1} Suppose $p((m+1)k -m)\le n< (m+1)kp-m$. We are going to show that there exist
$A_1,\dots,A_m$ such that $\Lambda_{kp}(\bA)=\emptyset$ and $\Lambda_{(k:p)}(\bA)\neq\emptyset$.

Let $r=\left [\dfrac{n}{kp-1}\right ]$, the greatest integer $\le \dfrac{n}{kp-1}$. Then $1\le r\le (m+1)$ and $r=m+1$ if and only if $n=(m+1)(kp-1)$.  Define for $1\le i\le \min\{r,m\}$, $A_i=\diag(a^i_1,a^i_2,\dots,a^i_n)$, where $a_i^j=1$ for $(i-1)(kp-1)+1\le j \le i(kp-1)$ and $a_i^j=0$ otherwise. Then, by Proposition \ref{1.2},  $\Lambda_{kp}(\bA)=\emptyset$. Since $n\ge p((m+1)k -m)$, by Proposition \ref{1.3}, $\Lambda_{(k:p)}(\bA)\neq\emptyset$. \qed
\end{example}

\section{Outlook}

As mentioned in Remark 2.5 (vi), 
one can further extend the definition of $\Lambda_{(k:p)}(\bA)$ and consider 
$(B_1, \dots, B_m) \in M_p^m$ such that 
$V^*A_jV = B_j \otimes I_k$ for some $n\times pk$ matrix $V$ satisfying $V^*V = I_{pk}$
without requiring $B_1, \dots, B_m$ to be diagonal matrices as in Definition 2.1.
We can then use the recent results and techniques in \cite{lau2018convexity} to show that
this set is non-empty and star-shaped if $n$ is sufficiently large.
This generalization also has a potential implication to the study of quantum error correcting codes.
In particular, one may use random qubits to do the encoding and protect the data bits 
in the quantum error correction process. We plan further research in this direction.




It has been proved that transmitting classical and quantum information simultaneously provides advantages from an information-theoretic perspective~\cite{devetak2005capacity}. Practical hybrid classical-quantum error correcting codes built on the mathematical techniques introduced here that achieve these advantages could benefit various quantum communication tasks. Communication protocols based on such hybrid codes are expected to enhance the communication security or increase channel capacities. We leave these lines of investigation for future studies.

Recently, the theory of QEC, and especially the framework of OAQEC, has been found to be closely related to the AdS/CFT correspondence and holographic principle in various ways~\cite{almheiri2015bulk,pastawski2015holographic,harlow2017ryu,sanches2016holographic,fiedler2017jones,pastawski2017code}. For instance, Almheiri, Dong and Harlow interpret the complex dictionary in AdS/CFT as the encoding operations of certain operator algebra quantum error correcting codes and bulk local operators are logic operators for these error correcting codes~\cite{almheiri2015bulk}. At the same time, holographic codes also inspire new code design methods from a geometric perspective~\cite{pastawski2017code}. The matricial range approach to hybrid codes introduced here could conceivably generate new connections between QEC and the theory of quantum gravity.


\vspace{0.1in}

{\noindent}{\it Acknowledgements.} D.W.K. was supported by NSERC. C.K.L.
is an affiliate member of the Institute for Quantum Computing,
University of Waterloo. His research was supported by USA NSF grant DMS 1331021,
and Simons Foundation Grant 351047.
M.N. was supported by Mitacs and the African Institute for Mathematical Sciences.  B.Z. was supported by NSERC.

\bibliography{HybridRefs}
\bibliographystyle{unsrt}

\end{document}